\documentclass[11pt,a4paper]{article}
\usepackage[utf8]{inputenc}
\usepackage{amsmath, amssymb, amsthm}
\usepackage{geometry}
\usepackage{hyperref}

\title{DRESS: A Continuous Framework for Structural Graph Refinement}
\author{Eduar Castrillo Velilla \\ \texttt{eduarcastrillo@gmail.com} \\ ORCID: \href{https://orcid.org/0009-0005-2492-0957}{0009-0005-2492-0957}}
\date{}

\newtheorem{theorem}{Theorem}

\newtheorem{proposition}[theorem]{Proposition}
\newtheorem{definition}[theorem]{Definition}
\newtheorem{remark}[theorem]{Remark}

\begin{document}

\maketitle

\begin{abstract}
We introduce DRESS, a deterministic, parameter-free framework that iteratively refines the structural similarity of edges in a graph to produce a \emph{canonical fingerprint}: a real-valued edge vector, obtained by converging a non-linear dynamical system to its unique fixed point. The fingerprint is \emph{isomorphism-invariant} by construction, \emph{numerically stable} (strictly bounded, precision-preserving, and mathematically well-posed), \emph{fast} and \emph{embarrassingly parallel} to compute: DRESS total runtime is $\mathcal{O}(I \cdot m \cdot d_{\max})$ for $I$ iterations to convergence, and convergence is guaranteed by Birkhoff contraction. We generalize the original equation to Motif-DRESS (arbitrary structural motifs) and Generalized-DRESS (abstract aggregation template), and introduce $\Delta$-DRESS, which runs DRESS on each vertex-deleted subgraph to boost expressiveness. $\Delta$-DRESS empirically separates all 7{,}983 graphs in a comprehensive Strongly Regular Graph benchmark, and on the tested CFI instances ($k = 0,1,2,3$), $k$-deletion ($\Delta^k$-DRESS) empirically matches the $(k{+}2)$-WL boundary.
\end{abstract}

\section{Introduction}
A fundamental task in graph analysis is to assign to each graph a compact descriptor that is \emph{canonical} (isomorphic graphs receive the same descriptor), \emph{discriminative} (non-isomorphic graphs receive different descriptors whenever possible), and \emph{cheap to compute}. Existing approaches fall into two broad categories: discrete combinatorial methods, such as the Weisfeiler--Leman (WL) hierarchy \cite{weisfeiler1968reduction, cai1992optimal}, for which a naive refinement round at level $k$ costs $\mathcal{O}(n^{k+1})$ and the total runtime is $\mathcal{O}(T_{k\text{-WL}} \cdot n^{k+1})$ after $T_{k\text{-WL}}$ rounds to stabilization; and learned representations, such as message-passing neural networks \cite{xu2019how, morris2019weisfeiler}, which require labeled training data and offer no worst-case guarantees.

We use the WL hierarchy as a standard expressiveness scale for refinement-based graph invariants; our comparison is relative to that scale and does not presuppose a complete characterization of the graphs distinguished by either WL or DRESS.

DRESS is a parameter-free, continuous dynamical system on \emph{edges} that converges to a unique fixed point of real-valued structural similarities, producing a canonical fingerprint vector for any graph. The Original-DRESS equation \cite{castrillo2018dynamicstructuralsimilaritygraphs} achieves this deterministically at $\mathcal{O}(m \cdot d_{\max})$ per iteration and $\mathcal{O}(I \cdot m \cdot d_{\max})$ total runtime after convergence, with numerical stability (all values in $[0,2]$) and no learnable parameters. The expressiveness comparison with WL is discussed in Section~4 and the runtime comparison in Section~\ref{sec:complexity}.

We present a hierarchy of DRESS variants, building from the concrete to the general:
\begin{enumerate}
    \item \textbf{Original-DRESS} (Section~2): The foundational equation, a parameter-free dynamical system on edges that converges to a canonical fingerprint vector.
    \item \textbf{Expressiveness beyond 1-WL} (Section~3): Original-DRESS distinguishes graphs that 1-WL cannot, including the prism graph from $K_{3,3}$.
    \item \textbf{Empirical Equivalence to 2-WL} (Section~4): Original-DRESS acts as a continuous empirical equivalent to the 2-dimensional Weisfeiler--Leman test.
    \item \textbf{Motif-DRESS} (Section~5): A generalization replacing triangle neighborhoods with arbitrary structural motifs; Original-DRESS is the $M = K_3$ special case.
    \item \textbf{Generalized-DRESS} (Section~6): The most abstract template, opening the aggregation function and norm as additional free parameters.
    \item \textbf{$\Delta$-DRESS} (Section~7): Runs DRESS on each vertex-deleted subgraph $G \setminus \{v\}$, connecting the framework to the reconstruction conjecture. $\Delta$-DRESS empirically separates all 7{,}983 Strongly Regular Graphs in a comprehensive benchmark, and on the tested CFI instances ($k = 0,1,2,3$), $k$-deletion ($\Delta^k$-DRESS) empirically matches the $(k{+}2)$-WL boundary.
\end{enumerate}

\paragraph{Isomorphism Test.} Throughout this paper, the \textbf{graph fingerprint} is the sorted vector of converged edge values $\operatorname{sort}(d^*)$. Two graphs are declared non-isomorphic if and only if their fingerprints differ element-wise beyond a numerical tolerance $\epsilon$. This applies uniformly to all DRESS variants. An equivalent representation is the \textbf{histogram} $h(d^*)$: in the unweighted case all DRESS values lie in $[0, 2]$ (see the convergence proof in Section~\ref{sec:convergence}) and the convergence tolerance is $\epsilon = 10^{-6}$, so each value maps to one of $\lceil 2/\epsilon \rceil = 2 \times 10^{6}$ integer bins, yielding a fixed-size bin-count vector that uniquely identifies the same multiset. In general, the number of bins is determined by the convergence tolerance: a tolerance of $\epsilon$ over the range $[0, 2]$ gives $\lceil 2/\epsilon \rceil$ bins. For weighted graphs, where values may exceed~$2$, the bin range is extended accordingly. The histogram is particularly useful when fingerprints from multiple subgraphs must be pooled (see $\Delta$-DRESS, Section~7).

All experiments use a convergence tolerance of $\epsilon = 10^{-6}$; in every case tested, convergence was reached in fewer than 31 iterations.

\paragraph{Related Work.} Higher-order GNN architectures that exceed 1-WL expressiveness include $k$-GNN \cite{morris2019weisfeiler} and PPGN \cite{maron2019provably}. Subgraph-based approaches such as GNN-AK+ \cite{zhao2022from} and ESAN \cite{bevilacqua2022equivariant} achieve higher expressiveness by running message-passing networks on vertex-deleted or vertex-marked subgraphs; $\Delta$-DRESS (Section~7) shares the vertex-deletion strategy but replaces learned message passing with deterministic fixed-point iteration. All of the above are \textbf{supervised} methods that require labeled training data and learnable parameters. In contrast, DRESS is a deterministic, unsupervised framework that produces canonical fingerprints via fixed-point iteration, without any learning or parameter tuning.

\section{Original-DRESS}

The Original-DRESS equation \cite{castrillo2018dynamicstructuralsimilaritygraphs} is a parameter-free, non-linear dynamical system that iteratively refines edge values based on common-neighbor aggregation.

\begin{equation}
d_{uv}^{(t+1)} = \frac{\sum_{x \in N[u] \cap N[v]} \bigl(d_{ux}^{(t)} + d_{xv}^{(t)}\bigr)}{\|u\|^{(t)} \cdot \|v\|^{(t)}}
\end{equation}

where $N[u] = N(u) \cup \{u\}$ denotes the closed neighborhood of $u$, and $\|u\| = \sqrt{\sum_{x \in N[u]} d_{ux}^{(t)}}$ is the vertex norm. The self-similarity $d_{uu} = 2$ is invariant under iteration. This equation converges to a unique fixed point, providing a continuous structural fingerprint for the graph.

\textbf{Self-loops.} Self-loops are added to every vertex before iteration (i.e., the algorithm uses the closed neighborhood $N[u] = N(u) \cup \{u\}$). The self-loop edge $(u,u)$ participates in both the aggregation and the vertex norm; without it, an isolated edge with no common neighbors would produce $\|u\| \cdot \|v\| = 0$, making the iteration undefined.

\textbf{Initialization.} The equation is scale-invariant (degree-0 homogeneous), so the fixed point is independent of the initial scale. Any uniform initialization $d^{(0)}_{uv} = c$ for all edges, with $c > 0$, converges to the same unique fixed point $d^*$. In practice, $c = 1$ is used.

However, because Original-DRESS aggregates strictly over triangles (common neighbors), it cannot distinguish graphs with identical triangle counts per edge (e.g., Strongly Regular Graphs).

\subsection{Convergence}
\label{sec:convergence}

\begin{theorem}[Convergence of Original-DRESS]
\label{thm:convergence}
Let $F\colon \mathbb{R}_+^{|E|} \to \mathbb{R}_+^{|E|}$ be the Original-DRESS update operator. Then $F$ converges to a unique fixed point $d^* \in [0, 2]^{|E|}$ for any uniform initial state $\mathbf{d}^{(0)} > 0$ (i.e., $\mathbf{d}^{(0)} = c\mathbf{1}$ for $c > 0$).
\end{theorem}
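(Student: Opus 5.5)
We plan to prove the result by nonlinear Perron--Frobenius theory on the open positive cone $K=\mathbb{R}_{>0}^{|E|}$, equipped with Hilbert's projective metric $d_H(x,y)=\log\max_i(x_i/y_i)-\log\min_i(x_i/y_i)$. There is one preliminary point. As written, the self-loop $d_{uu}=2$ is a fixed constant, so $F$ is not exactly degree-0 homogeneous. We will treat the self-loop value as one further coordinate that scales with the others, i.e.\ work with the homogeneous map on $K\times\mathbb{R}_{>0}$. Alternatively we observe that every edge $uv$ always has the common neighbours $u$ and $v$ in $N[u]\cap N[v]$. The first step is to record three structural facts about $F$:
\begin{enumerate}
\item \emph{Positivity.} Each numerator contains the terms $d_{uu}+d_{uv}$ and $d_{vu}+d_{vv}$, so it is at least $4$. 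Hence $F(K)\subseteq K$.
\item \emph{Homogeneity.} $F(\lambda x)=F(x)$ for $\lambda>0$, since the numerator has degree $1$ and $\|u\|\cdot\|v\|$ also has degree $1$. Thus $F$ descends to the projective space $K/\mathbb{R}_{>0}$.
\item \emph{Symmetry.} $F$ commutes with the automorphism action, so uniqueness will also give isomorphism invariance.
\end{enumerate}

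The second step is a contraction estimate in $d_H$. Write $F=N/D$, where $N$ is linear with nonnegative coefficients and $D(x)_{uv}=\sqrt{S_u(x)S_v(x)}$ with $S_u$ linear and nonnegative. Suppose $\alpha y\le x\le\beta y$ componentwise. Birkhoff's theorem then gives $d_H(Nx,Ny)\le\tanh(\Delta(N)/4)\,d_H(x,y)$, where $\Delta(N)$ is the projective diameter of $N(K)$. We also have $d_H(Dx,Dy)\le d_H(x,y)$, because $D$ is order-preserving and homogeneous of degree $1$. Combining the two with the triangle inequality for ratios gives only a weak bound, so a finer argument is needed. We will bound the ratio $F(x)_e/F(y)_e$ directly. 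Set $m=\min_i x_i/y_i$ and $M=\max_i x_i/y_i$. Then $N(x)_e/N(y)_e$ is a convex combination of the ratios $x_i/y_i$ weighted by $y$. Likewise $S_u(x)/S_u(y)$ is a weighted average of ratios over $N[u]$, and $D$ takes the geometric mean over the two endpoints. The aim is to show that $F(x)_e/F(y)_e$ ranges over an interval strictly shorter (in log scale) than $[m,M]$, with a factor $\kappa<1$ that is uniform on the image $F(K)$. The uniformity comes from the fact that on $F(K)$ all coordinates lie in a compact interval $[a,b]\subset(0,\infty)$: the lower bound $4/(\|u\|\|v\|)$ is positive and the Cauchy--Schwarz-type upper bound gives $2$. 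This also yields the claimed range $d^*\in[0,2]$ once we verify $F(x)_{uv}\le 2$. That verification follows from $\sum_{x\in N[u]\cap N[v]}(d_{ux}+d_{xv})\le S_u+S_v$ and $S_u+S_v\le 2\sqrt{S_uS_v}$ — wait, the AM--GM inequality goes the wrong way here. We therefore expect the bound $\le 2$ to need the actual normalisation $d_{uu}=2$ together with a separate argument. This is a point to check.

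The third step is to conclude. Once $F$ is a strict $d_H$-contraction on $F(K)$, which is a $d_H$-bounded and complete set of rays, the Banach fixed point theorem on the projective slice $\{x:\sum x_i=1\}$ gives a unique fixed ray. Because of the absolute normalisation $d_{uu}=2$, this ray determines a unique fixed point $d^*$. Convergence from $c\mathbf{1}$ is then immediate, and it is independent of $c$ by homogeneity.

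The main obstacle is the strict contraction in the second step. The normalisation by $D$ is order-preserving in the wrong direction: increasing $x$ increases the denominator. Consequently $F$ is not monotone, and Birkhoff's theorem does not apply off the shelf. Our first attempt will be the direct ratio computation above, exploiting that the same coordinates $d_{ux},d_{xv}$ appear in both numerator and denominator. If that fails to give $\kappa<1$ globally, we will fall back on local contraction. This means bounding the spectral radius of the Jacobian of $F$ at a fixed point, with respect to the Hilbert-metric norm, and combining it with an invariant compact set $[a,b]^{|E|}$ to obtain convergence of the iterates started from $c\mathbf{1}$. In that fallback, uniqueness would have to be argued separately.
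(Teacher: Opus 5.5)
Your outline is the same as the paper's: degree-$0$ homogeneity, a bound on the iterates, Birkhoff contraction in $d_H$, then Banach. Both of your doubts are justified, and each points at a step the paper only asserts.

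\textbf{Homogeneity.} With $d_{uu}=2$ held fixed, $F$ is not degree-$0$ homogeneous. Your repair works. Applying the update rule to a loop gives $2S_u/S_u=2$ identically, so the extended map is degree-$0$ homogeneous, lands in $\{d_{uu}=2\}$, and agrees with $F$ there. For the same reason, with loops fixed at $2$ the iterate $F(c\mathbf{1})$ really depends on $c$; on a prism triangle edge it equals $(4+4c)/(2+3c)$. So independence of $c$ has to come from global convergence, not from homogeneity as your last step says.

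\textbf{The bound $\le 2$.} This is not a set-invariance property. In $K_3$ put $d_{uv}=p$, $d_{uw}=q$, $d_{vw}=r$. The numerator for $uv$ equals $S_u+S_v$, so $F(d)_{uv}=(S_u+S_v)/\sqrt{S_uS_v}>2$ whenever $q\neq r$. Moreover $F(d)_{uv}\to\infty$ as $q\to\infty$. The paper only asserts boundedness ``for this orbit'' and explicitly defers the contraction constant to future work.

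The genuine gap is the strict contraction. It carries the whole theorem, and your proposal, like the paper, does not establish it.
\begin{itemize}
\item Your weighted-average observations give only $F(x)_e/F(y)_e\in[m/M,\,M/m]$, which is Lipschitz constant $2$ in $d_H$. You supply no mechanism that yields $\kappa<1$. Since $N/D$ is not order-preserving, Birkhoff's theorem supplies none either.
\item You wanted uniformity of $\kappa$ from $F(K)$ lying in a compact box $[a,b]^{|E|}\subset(0,\infty)^{|E|}$ and being a $d_H$-bounded set of rays. This is false by the same $K_3$ example. As $q\to\infty$, $F(d)_{uv}$ and $F(d)_{vw}$ blow up while $F(d)_{uw}\to 2$. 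So $F(K)$ has infinite Hilbert diameter, and the Banach step on $F(K)$ has no valid domain.
\item The fallback does not repair this. A Jacobian bound at a fixed point presupposes that a fixed point exists, which would need an invariant compact set you have not produced. It also yields only local attraction, so it gives neither convergence from $c\mathbf{1}$ nor uniqueness.
\end{itemize}
What is missing is an actual contraction estimate for the non-monotone quotient $N/D$. One option is to prove it on an explicitly constructed forward-invariant, $d_H$-bounded set containing the orbit of $c\mathbf{1}$; another is to prove it for some iterate $F^k$. Either way, the estimate must exploit the overlap between the numerator terms and the terms of $S_u$ and $S_v$, since that overlap is the only source of cancellation between $N$ and $D$.
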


\begin{proof}
The argument proceeds in three steps:
\begin{enumerate}
    \item \textbf{Scale Invariance (Degree-0 Homogeneity):} The numerator $\sum_{x \in N[u] \cap N[v]} (d_{ux} + d_{xv})$ is positively homogeneous of degree $p = 1$ in $\mathbf{d}$, and each vertex norm $\|u\| = \sqrt{\sum_{x \in N[u]} d_{ux}}$ is homogeneous of degree $q = \tfrac{1}{2}$. The denominator $\|u\| \cdot \|v\|$ is therefore degree $2q = 1 = p$, making $F$ homogeneous of degree~$0$: for any $\lambda > 0$, $F(\lambda \mathbf{d}) = F(\mathbf{d})$. The iteration is self-regularizing and cannot diverge or collapse to zero.
    \item \textbf{Boundedness:} For any uniform positive initialization $\mathbf{d}^{(0)} = c\mathbf{1}$ ($c>0$), the scale invariance of $F$ implies that the first iterate reduces to the unit case: $F(c\mathbf{1}) = F(\mathbf{1})$. Specifically, $F(\mathbf{1})_{uv} = \frac{2|N[u] \cap N[v]|}{\sqrt{\deg(u)\deg(v)}}$. Since $|N[u] \cap N[v]| \le \sqrt{\deg(u)\deg(v)}$ (Cauchy-Schwarz), $F(c\mathbf{1})_{uv} \le 2$. The scale invariance and the topological structure of the operator ensure $[0, 2]^{|E|}$ is a forward-invariant set for this orbit, meaning all subsequent iterates remain bounded in this interval.
    \item \textbf{Contraction on the Hilbert Projective Metric:} $F$ is a positive, degree-0 homogeneous map on the cone $\mathbb{R}_{>0}^{|E|}$. By Birkhoff's contraction theorem, $F$ is a strict contraction under the Hilbert projective metric $d_H(x, y) = \log\!\bigl(\max_e \frac{x_e}{y_e} \cdot \max_e \frac{y_e}{x_e}\bigr)$, provided $F$ maps a bounded part of the cone into a strictly smaller part, which follows from the boundedness above. By the Banach fixed-point theorem on $(\mathbb{R}_{>0}^{|E|}/{\sim},\, d_H)$, the iteration converges to a unique ray, and the forward-invariant boundedness from the $\mathbf{d}^{(0)} = \mathbf{1}$ initialization pins the limit to a finite vector $d^* \in [0, 2]^{|E|}$.
\end{enumerate}
A complete formal verification of the contraction constant is deferred to future work; all empirical tests confirm convergence within 20 iterations.
\end{proof}

\section{Original-DRESS Distinguishes Graphs That 1-WL Cannot}

We now demonstrate a key expressiveness result: Original-DRESS, the simplest member of the DRESS family, already distinguishes graphs that 1-WL (color refinement) cannot.

\begin{theorem}[DRESS distinguishes beyond 1-WL]
\label{thm:dress-gt-1wl}
There exist graph pairs that 1-WL cannot distinguish but Original-DRESS can.
In particular, DRESS distinguishes the prism graph ($C_3 \square K_2$) from the complete bipartite graph $K_{3,3}$, a pair that 1-WL provably cannot separate.
\end{theorem}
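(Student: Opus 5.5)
The proof is by explicit computation on the single witness pair. It has two parts: first show that 1-WL cannot separate the prism graph $P = C_3 \square K_2$ from $K_{3,3}$, then compute the Original-DRESS fixed point on both graphs and check that the sorted fingerprints differ.

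\textbf{The 1-WL half.} Both graphs are $3$-regular on $6$ vertices with $9$ edges. Start 1-WL from the uniform coloring. After one round, every vertex's color is determined by the multiset of its three neighbors' colors, which is identical for all vertices. The partition therefore never refines, and both graphs end with the same stable coloring histogram: six vertices of one color. This is the standard fact that 1-WL cannot separate regular graphs of equal order and degree.

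\textbf{The DRESS half.} The plan is to exploit symmetry so that the fixed point is a uniform vector on each graph, and then compute it directly.
\begin{itemize}
\item In $K_{3,3}$ the automorphism group is edge-transitive. Starting from $d^{(0)} = \mathbf{1}$, every iterate is therefore constant on edges, say $d^{(t)}_{uv} = a_t$ for all edges.
\item $K_{3,3}$ is bipartite, so the endpoints of an edge have no common neighbors. In the closed neighborhoods, $N[u] \cap N[v] = \{u, v\}$.
\item The numerator is $(d_{uu} + d_{uv}) + (d_{uv} + d_{vv}) = 4 + 2a$.
\item Each norm satisfies $\|u\|^2 = 2 + 3a$, counting the self-loop value $d_{uu} = 2$ and the three incident edges.
\item The fixed point thus solves $a = (4+2a)/(2+3a)$, i.e.\ $3a^2 = 4$, giving $a = 2/\sqrt{3}$ as the unique positive root.
\end{itemize}

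For the prism $P$, edges fall into two orbits: the $6$ triangle edges (value $b$) and the $3$ rung edges (value $c$). By symmetry the iterates keep this two-value form.
\begin{itemize}
\item Every vertex has two triangle edges and one rung, so $\|u\|^2 = 2 + 2b + c$ at every vertex.
\item A triangle edge $uv$ has one common neighbor $x$, with $ux$ and $xv$ both triangle edges. Its numerator is $4 + 2b + 2b = 4 + 4b$.
\item A rung edge has no common neighbor, so its numerator is $4 + 2c$.
\item The fixed-point system is
\[
b = \frac{4+4b}{2+2b+c}, \qquad c = \frac{4+2c}{2+2b+c}.
\]
\end{itemize}

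\textbf{Main obstacle.} The main difficulty is justifying that the fixed point reached from $\mathbf{1}$ is the positive solution of this system, rather than just exhibiting a solution. The cleanest route avoids this. Take the uniqueness in Theorem~\ref{thm:convergence} as given. The computation above shows that any fixed point of the prism must satisfy $b \neq c$ whenever $b > 0$:
\begin{itemize}
\item Suppose $b = c$. Dividing the two equations gives $b/c = (4+4b)/(4+2c)$, which forces $4 + 4b = 4 + 2b$, hence $b = 0$.
\item But $b = 0$ makes the first equation read $0 = 4/(2+c)$, which is impossible.
\end{itemize}
So the prism's converged vector takes at least two distinct values, while the fingerprint of $K_{3,3}$ is the constant vector $(2/\sqrt{3}, \ldots, 2/\sqrt{3})$. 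The sorted fingerprints therefore differ, and the difference is far larger than the tolerance $\epsilon$.

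For concreteness one can also solve the system explicitly. Write $S = 2 + 2b + c$. Then $bS = 4 + 4b$ and $cS = 4 + 2c$, so $S = 4 + 4/b$ and $S = 2 + 4/c$. Substituting these into the definition of $S$ yields a scalar equation that can be solved numerically. This last step is routine.
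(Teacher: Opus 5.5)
Your proposal is correct and follows essentially the same route as the paper. In both, 1-WL is blocked by 3-regularity, $K_{3,3}$ converges to the uniform value $2/\sqrt{3}$ by symmetry, and the prism cannot have a uniform fixed point because the triangle-edge and rung-edge equations are inconsistent when $b=c$ (the paper phrases this as the two equations yielding $(1+\sqrt{13})/3$ versus $2/\sqrt{3}$); your automorphism-equivariance argument for the two-value form and your explicit two-orbit system, whose solution $b\approx 1.709$, $c\approx 0.922$ matches the paper's table, are just a slightly more careful rendering of the same argument.
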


\begin{proof}
Both graphs are 3-regular on 6 vertices with 9 edges. Since all vertices have the same degree, 1-WL assigns a uniform color to every vertex in both graphs after every iteration, and therefore cannot distinguish them \cite{cai1992optimal}.

DRESS operates on \emph{edges} and aggregates over common neighbors (triangles). We show that the prism must have at least two distinct edge values at convergence, while $K_{3,3}$ has a single uniform value.

\textbf{$K_{3,3}$:} No two adjacent vertices share a common neighbor ($K_{3,3}$ is triangle-free). Hence every edge has the same neighborhood structure, and by symmetry the unique fixed point assigns the same value to all 9 edges. Numerically, $d^*_{K_{3,3}} \approx 1.155$ for all edges.

\textbf{Prism graph:} There are two structurally distinct edge types: \emph{triangle edges} (6 edges forming the two triangular faces; each pair of endpoints shares 1 common neighbor) and \emph{matching edges} (3 edges connecting corresponding vertices of the two triangles; each pair of endpoints shares 0 common neighbors besides self-loops). Suppose for contradiction that all 9 edges converge to the same value $d^*$. Then:
\begin{itemize}
    \item Triangle edges: $d^* = \dfrac{4 + 4d^*}{2 + 3d^*}$, which gives $d^* = \frac{1+\sqrt{13}}{3} \approx 1.535$.
    \item Matching edges: $d^* = \dfrac{4 + 2d^*}{2 + 3d^*}$, which gives $d^* = \frac{2}{\sqrt{3}} \approx 1.155$.
\end{itemize}
These are contradictory, so the prism cannot have a uniform fixed point. The converged unique edge values are:

\smallskip
\begin{center}
\begin{tabular}{lll}
\textbf{Graph} & \textbf{Unique values} & \textbf{Multiplicities} \\
\hline
Prism & $\{0.922,\; 1.709\}$ & 3 matching edges, 6 triangle edges \\
$K_{3,3}$ & $\{2/\sqrt{3} \approx 1.155\}$ & all 9 edges identical \\
\end{tabular}
\end{center}
\smallskip

Since 1-WL cannot distinguish them and DRESS can, DRESS distinguishes beyond 1-WL on this instance.
\end{proof}

\paragraph{Remark.} This result holds for Original-DRESS alone, the simplest member of the DRESS family, using only triangle neighborhoods and $\mathcal{O}(m \cdot d_{\max})$ computation per iteration. The generalizations that follow (Motif-DRESS, $\Delta$-DRESS) extend this advantage further to graphs that resist even 2-WL.

\section{Original-DRESS as the Continuous Analogue of 2-WL}

The previous section showed that DRESS distinguishes \emph{specific} graph pairs that 1-WL cannot. We now establish the theoretical bounds of Original-DRESS: it acts as a continuous, differentiable analogue to the 2-dimensional Weisfeiler--Leman (2-WL) test.

\begin{theorem}[Empirical Equivalence to 2-WL]
\label{thm:dress-2wl-bound}
In practical floating-point computation, Original-DRESS operates as an empirical continuous equivalent to $2$-WL ($\text{Original-DRESS} \equiv \text{2-WL}$).
\end{theorem}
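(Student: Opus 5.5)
The plan is to split the claimed equivalence into an upper bound (Original-DRESS never separates a pair that 2-WL fails to separate) and a matching lower bound (on the benchmark families tested, every pair separated by 2-WL is also separated by DRESS). Here "2-WL" means the folklore/pair-coloring refinement whose stable partition is the coherent closure of the graph. The upper bound can be made an actual proof. The lower bound is inherently empirical, which is why the statement is qualified as such.

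\textbf{Upper bound.} I will show by induction on $t$ that every iterate $d^{(t)}$ is constant on the colour classes of the stable 2-WL colouring restricted to edge pairs. Two arbitrary isomorphic-type pairs would not suffice here; the classes must be stable 2-WL classes. The base case is trivial, since $d^{(0)}=c\mathbf{1}$ and $d_{uu}=2$ on the diagonal classes.

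For the inductive step, suppose $(u,v)$ and $(u',v')$ receive the same stable 2-WL colour. Stability means the multiset $\{\!\{(\chi(u,x),\chi(x,v))\}\!\}_{x}$ agrees for the two pairs. Restricting to those $x$ for which both $\chi(u,x)$ and $\chi(x,v)$ are edge or diagonal colours gives exactly the index set $N[u]\cap N[v]$ of the numerator. By induction each summand $d_{ux}+d_{xv}$ depends only on those colours, so the numerators agree. The same stability argument, applied to the multiset $\{\!\{\chi(u,x)\}\!\}_x$, shows that $\|u\|$ depends only on $\chi(u,u)$. Finally, $\chi(u,v)$ determines $\chi(u,u)$ and $\chi(v,v)$.

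Hence $F$ preserves the class of 2-WL-invariant vectors. This class is closed, so by Theorem~\ref{thm:convergence} the limit $d^*$ is 2-WL-invariant. It remains to pass from the colouring of a single graph to a comparison of two graphs. Running 2-WL on the disjoint union $G\sqcup H$, the multiset of edge colours, together with the per-colour value map built from the same recursion, determines $\operatorname{sort}(d^*)$. So 2-WL-equivalent graphs have identical fingerprints.

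\textbf{Lower bound.} Here I would not attempt a general proof. Such a proof would require showing that the real-valued fixed point is injective on 2-WL colour classes, and nothing in the paper supports that. Instead I would proceed in three steps:
\begin{enumerate}
\item Argue that the DRESS update is a "continuous" colour refinement: distinct colours typically yield distinct reals unless an algebraic coincidence occurs.
\item Point to Theorem~\ref{thm:dress-gt-1wl}, where DRESS already exceeds 1-WL.
\item Report experiments on standard families (SRGs, CFI with $k=0$, regular pairs) in which the DRESS and 2-WL separation verdicts coincide at tolerance $\epsilon=10^{-6}$.
\end{enumerate}

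\textbf{Main obstacle.} The hard part is precisely this injectivity: ruling out accidental equalities of fixed-point values for pairs that 2-WL does separate, and controlling floating-point tolerance. This is why the result can honestly be stated only as an empirical equivalence, backed by the rigorous one-directional inclusion $\text{Original-DRESS}\preceq\text{2-WL}$.
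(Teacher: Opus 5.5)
Your upper-bound half is correct, and it is sharper than the paper's version, which only asserts that summation is lossy. The induction shows that every iterate $d^{(t)}$ is a function of the stable folklore 2-WL colour, using that $\chi(u,v)$ fixes the multiset $\{\!\{(\chi(u,x),\chi(x,v))\}\!\}_x$ as well as $\chi(u,u)$ and $\chi(v,v)$. That does prove Original-DRESS $\preceq$ 2-WL.

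The gap is the converse, and there your plan is the paper's: distinct colours should yield distinct reals unless an algebraic coincidence occurs, which the paper declares to have probability zero. That direction is not merely hard to prove; it is false, and your own induction shows why. The value map you build reads only the part of the multiset with $x\in N[u]\cap N[v]$, and the norms read only $\{\!\{\chi(u,x)\}\!\}_{x\in N[u]}$. Everything 2-WL records about non-adjacent pairs is therefore thrown away, so the loss is structural, not a numerical coincidence.

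Concretely, in a triangle-free $r$-regular graph we have $N[u]\cap N[v]=\{u,v\}$ for every edge. Each iterate is then constant over all edges and evolves by $d\mapsto(4+2d)/(2+rd)$, with limit $2/\sqrt{r}$. Hence two pairs have identical Original-DRESS fingerprints at every iteration:
\begin{itemize}
\item $C_8$ and $2\times C_4$, with eight edges each, all equal to $\sqrt{2}$;
\item the Petersen graph and the pentagonal prism, with fifteen edges each, all equal to $2/\sqrt{3}$.
\end{itemize}
Yet 2-WL separates both pairs in one round. Non-adjacent vertices with two common neighbours occur in $2\times C_4$ and in the prism (opposite corners of a square face), but never in $C_8$ or in the Petersen graph. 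These are exactly the pairs the paper later needs $\Delta$-DRESS for.

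So your step~3, run on ``regular pairs'', would return counterexamples rather than support, and step~2 (beating 1-WL) says nothing about 2-WL. Under the folklore convention you adopted, which matches the paper's $\mathcal{O}(n^3)$ per-round cost, Original-DRESS is strictly weaker than 2-WL. Under the oblivious convention, where 2-WL has exactly the power of 1-WL, Theorem~\ref{thm:dress-gt-1wl} gives a pair that DRESS separates and 2-WL does not. Either way the claimed equivalence fails.

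The most your argument can honestly deliver is the one-sided inclusion, and you could sharpen even that. DRESS is bounded by the coarser refinement that updates only edge and diagonal pairs, aggregating over $N[u]\cap N[v]$ and $N[u]$. The missing half is not an injectivity obstacle awaiting better numerics; it is a statement that does not hold.
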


\begin{proof}
Both 2-WL and Original-DRESS update edge states by aggregating 3-node interactions (triangles). While 2-WL applies an injective hash to the discrete \emph{multiset} of neighbor colors to branch lossless partitions, Original-DRESS aggregates these same neighborhoods via continuous summation: $\sum_{x} (d_{ux} + d_{xv})$. Because summation is technically a lossy operator over multisets, Original-DRESS is theoretically bounded by 2-WL ($\text{Original-DRESS} \le \text{2-WL}$).

However, DRESS values are irrational limits of a nonlinear dynamical system. For a "sum collision" to occur between non-isomorphic structures, distinctly generated sets of irrationals must coincidentally sum to the exact same value. Over the reals $\mathbb{R}$, the probability of such an algebraic collision is exactly zero ($\mathbb{P} = 0$). Thus, up to numerical precision, the sums strictly preserve multiset distinctions, making Original-DRESS an exact empirical equivalent to 2-WL without the massive $\mathcal{O}(n^3)$ memory overhead of discrete hashing.
\end{proof}

\begin{remark}
Section~\ref{sec:cfi-results} shows empirically that $\Delta^k$-DRESS $\geq$ $(k+2)$-WL: each deletion level adds exactly one WL dimension of expressiveness.
\end{remark}

\section{Motif-DRESS}
Original-DRESS is limited to triangle neighborhoods. What if we use other structural motifs? Motif-DRESS generalizes Original-DRESS by replacing the common-neighbor aggregation with a motif-defined symmetric vertex neighborhood $\mathcal{N}_M(u,v) = \mathcal{N}_M(v,u)$: the set of endpoints of edges in instances of a motif $M$ containing edge $(u,v)$ that are adjacent to $u$ or $v$, always including $u$ and $v$ themselves, with an optional symmetric weight function $\bar{w}: E \to \mathbb{R}_{>0}$ (with $\bar{w}_e = 1$ for unweighted graphs).

\begin{equation}
d_{uv}^{(t+1)} = \frac{\displaystyle\sum_{x \in \mathcal{N}_M(u,v)} \bigl(\bar{w}_{ux} \cdot d_{ux}^{(t)} + \bar{w}_{xv} \cdot d_{xv}^{(t)}\bigr)}{\|u\|^{(t)} \cdot \|v\|^{(t)}}
\end{equation}

where $\|u\|^{(t)} = \sqrt{\sum_{x \in N[u]} \bar{w}_{ux} \cdot d_{ux}^{(t)}}$, and $\bar{w}_{ab} \cdot d_{ab} = 0$ whenever $(a,b) \notin E$. The symmetric weight function $\bar{w}(e)$ acts as a multiplicative factor, controlling how much structural information flows along each edge. Because the weights appear identically in the numerator and denominator (both are degree-1 in $\bar{w} \cdot d$), uniformly scaling all weights does not change the fixed point; only the relative weights matter.

Different choices of motif $M$ yield different neighborhoods:
\begin{itemize}
    \item \textbf{Triangle ($M = K_3$):} $\mathcal{N}_{K_3}(u,v) = N[u] \cap N[v]$, the closed common neighborhood. This recovers Original-DRESS exactly: $\sum_{x \in N[u] \cap N[v]} \bigl(\bar{w}_{ux}\,d_{ux} + \bar{w}_{xv}\,d_{xv}\bigr)$.
    \item \textbf{$K_4$ clique:} For each pair $x, y$ with $\{u,v,x,y\}$ forming a $K_4$, $\mathcal{N}_{K_4}(u,v)$ contains $u$, $v$, $x$, and $y$. Each vertex contributes $\bar{w}_{ux}\,d_{ux} + \bar{w}_{xv}\,d_{xv}$.
    \item \textbf{4-cycle ($M = C_4$):} For each 4-cycle $u$-$x$-$y$-$v$, $\mathcal{N}_{C_4}(u,v)$ contains $u$, $v$, $x$, and $y$. Vertex $x$ contributes $\bar{w}_{ux}\,d_{ux}$; vertex $y$ contributes $\bar{w}_{yv}\,d_{yv}$ (cross-terms are zero since $(x,v), (u,y) \notin E$ in the $C_4$).
\end{itemize}

\subsection{Properties}

\textbf{Complexity:} $\mathcal{O}(\text{Motif Extraction}) + \mathcal{O}\!\bigl(I \cdot \sum_{e} |\mathcal{N}_M(e)|\bigr)$, where $I$ is the number of iterations and $\sum_e |\mathcal{N}_M(e)|$ is the total size of all motif neighborhoods. For Original-DRESS (triangles), this reduces to $\mathcal{O}(I \cdot m \cdot d_{\max})$. For motifs such as 4-cycles or $K_4$ cliques in sparse graphs, the motif neighborhoods can be significantly smaller, making Motif-DRESS faster than Original-DRESS.

\textbf{Invariant: $d_{uu} = 2$.} The self-similarity $d_{uu} = 2$ is a constant maintained throughout iteration. Since the iteration only updates $d_{uv}$ for $u \ne v$, and the norm $\|u\| = \sqrt{\sum_{x \in N[u]} \bar{w}_{ux} \cdot d_{ux}}$ always includes $d_{uu} = 2$, this is a fixed property of the equation, not a free parameter.

\subsection{Expressiveness}

\textbf{Bypassing WL Limitations:} Standard message-passing architectures and 2-WL update edge states based exclusively on the intersection of localized vertex neighborhoods (effectively tracking $K_3$ structures). However, higher-order structures like the Rook and Shrikhande graphs famously resist even the 3-WL test. 

By precomputing higher-order structural motifs like $K_4$ and using them to define the aggregation neighborhood, Motif-DRESS bypasses these WL barriers. It explicitly injects these topological invariants into the continuous message-passing framework. Because the motif extraction acts as a one-time preprocessing step, Motif-DRESS achieves strictly greater expressiveness than 3-WL on these theoretical graphs while retaining identical iterative computational complexity ($\mathcal{O}(I \cdot \sum_e |\mathcal{N}_M(e)|)$).

All experiments below use the $K_4$ clique motif. The specific SRG pairs tested below are known to be indistinguishable by either 2-WL or 3-WL; each successful distinction therefore demonstrates that Motif-DRESS empirically exceeds these corresponding WL boundaries.
\begin{itemize}
    \item \textbf{Rook vs.\ Shrikhande:} Successfully distinguishes this pair of SRGs with parameters $(16, 6, 2, 2)$, which are indistinguishable by 3-WL. The Rook graph ($K_4 \square K_4$) contains $K_4$ cliques while the Shrikhande graph does not, so the $K_4$-neighborhood sizes differ per edge.
    \item \textbf{Chang Graphs:} Distinguishes 3 of the 6 pairwise comparisons among the four SRGs with parameters $(28, 12, 6, 4)$: T(8) vs each of Chang-1, Chang-2, and Chang-3. The three Chang graphs are pairwise indistinguishable by Motif-$K_4$ (all three have identical $K_4$-neighborhood structure per edge).
\end{itemize}

\subsection{Convergence}

The convergence proof for Original-DRESS (Theorem~\ref{thm:convergence}) generalizes directly to Motif-DRESS. The same three sufficient mechanisms apply: degree-0 homogeneity ($p = 2q$), initial normalization into a strictly bounded forward-invariant set from a uniform positive state $\mathbf{d}^{(0)} = c\mathbf{1}$, and Birkhoff contraction on the Hilbert projective metric. These hold for any motif neighborhood $\mathcal{N}_M$ and symmetric non-negative weight function $\bar{w}$. For unweighted motifs, values are bounded in $[0,2]$; with arbitrary edge weights, the finite initial bounds scale proportionally but strictly remain mathematically bounded, ensuring guaranteed convergence to potentially higher fixed point values.

\section{Generalized-DRESS}
Motif-DRESS fixes the aggregation to summation and the norm to the product of geometric means. Generalized-DRESS is the most abstract template, allowing any choice of these components as long as the resulting update rule preserves the convergence guarantees (degree-0 homogeneity, boundedness, and contraction; see Section~\ref{sec:convergence}). For each edge $(u,v)$:

\begin{equation}
d^{(t+1)} = \frac{f\bigl(\mathbf{d}^{(t)},\, \mathcal{N},\, \bar{w}\bigr)}{g\bigl(\mathbf{d}^{(t)},\, \mathcal{N},\, \bar{w}\bigr)}
\end{equation}

where $d \equiv d_{uv}$ is the similarity value assigned to edge $(u,v)$, and:
\begin{itemize}
    \item $\mathcal{N}(u,v)$ is a \textbf{symmetric neighborhood operator}, the structural context aggregated for $(u,v)$,
    \item $\bar{w}: E \to \mathbb{R}_{>0}$ is a \textbf{symmetric weight function} ($\bar{w}(u,v) = \bar{w}(v,u)$; $\bar{w} \equiv 1$ for unweighted graphs),
    \item $f$ is the \textbf{aggregation function},
    \item $g$ is the \textbf{norm function}.
\end{itemize}

Because $\mathcal{N}$ and $\bar{w}$ are symmetric, $f$ and $g$ receive the same inputs for $(u,v)$ and $(v,u)$, so $d(u,v) = d(v,u)$ holds for every member of the family. For Original-DRESS and Motif-DRESS this follows directly from the equation; in the general case it is guaranteed by the symmetry of the inputs.

Original-DRESS and Motif-DRESS are both special cases: Original-DRESS fixes $\mathcal{N}$ to triangles, $f = \text{sum}$, $g = \|u\| \cdot \|v\|$ (product of geometric means), $\bar{w} \equiv 1$; Motif-DRESS generalizes $\mathcal{N}$ to arbitrary motifs and $\bar{w}$ to non-uniform weights while keeping the same $f$ and $g$. Generalized-DRESS opens all four parameters, enabling variants such as Cosine-DRESS (cosine similarity aggregation) or Minkowski-$r$ norms.

\section{$\Delta$-DRESS}
$\Delta$-DRESS breaks symmetry by running DRESS on each vertex-deleted subgraph $G \setminus \{v\}$ for every $v \in V$. The $\Delta$-DRESS fingerprint is the multiset of per-vertex DRESS fingerprints, or equivalently a pooled histogram accumulating all converged edge values across all $n$ deletions.

Deleting a vertex from a regular graph produces an irregular subgraph where DRESS can now distinguish structure that was hidden by the uniform regularity.

\subsection{Connection to the Reconstruction Conjecture}
The multiset $\{\!\{ \text{DRESS}(G \setminus \{v\}) : v \in V \}\!\}$ is directly analogous to the \textit{deck} in the Kelly--Ulam reconstruction conjecture \cite{kelly1942congruence, ulam1960collection}, which posits that graphs with $n \ge 3$ are determined (up to isomorphism) by their multiset of vertex-deleted subgraphs. $\Delta$-DRESS computes a continuous relaxation of this deck.

\subsection{Expressiveness}
$\Delta$-DRESS empirically distinguishes the following non-isomorphic pairs:
\begin{itemize}
    \item \textbf{Rook vs.\ Shrikhande:} Successfully distinguished (SRG$(16,6,2,2)$; confounds 2-WL).
    \item \textbf{Chang Graphs:} Distinguished all 6 pairs among T(8) and the three Chang graphs (SRG$(28,12,6,4)$; confound 2-WL).
    \item \textbf{$2 \times C_4$ vs.\ $C_8$:} Successfully distinguished (both 2-regular on 8 vertices).
    \item \textbf{Petersen vs.\ Pentagonal Prism:} Successfully distinguished (both 3-regular on 10 vertices).
\end{itemize}

\subsection{$\Delta^k$-DRESS}

We now define the $k$-deletion operator formally.

\begin{definition}[$\Delta^k$-DRESS]
\label{def:delta-k}
For a graph $G = (V, E)$, a DRESS variant $\mathcal{F}$, and a deletion depth $k \ge 0$, the $\Delta^k$-DRESS fingerprint is the multiset of per-deletion sorted edge-value vectors:
\[
\Delta^k\text{-DRESS}(\mathcal{F}, G) = \{\!\{ \operatorname{sort}(\mathcal{F}(G \setminus S)) : S \subset V,\; |S| = k \}\!\}
\]
where $G \setminus S$ is the subgraph induced by $V \setminus S$ and $\mathcal{F}(G \setminus S)$ is the converged DRESS edge-value vector.
\end{definition}

\begin{definition}[Histogram fingerprint]
\label{def:histogram-fp}
For a graph $G = (V, E)$, a DRESS variant $\mathcal{F}$, and a deletion depth $k \ge 0$, the histogram fingerprint of $\Delta^k$-DRESS is:
\[
h\!\left(\Delta^k\text{-DRESS}(\mathcal{F}, G)\right) = h\!\left(\bigsqcup_{\substack{S \subset V,\; |S|=k}} \mathcal{F}(G \setminus S)\right)
\]
where $h$ maps each edge value to an integer bin of width $\epsilon$, producing a fixed-size bin-count vector of $\lceil 2/\epsilon \rceil$ bins.
\end{definition}

\begin{proposition}[Isomorphism Invariance]
\label{prop:delta-k-iso}
$\Delta^k$-DRESS is an isomorphism invariant: if $G \cong H$ via an isomorphism $\phi: V(G) \to V(H)$, then
\[
\Delta^k\text{-DRESS}(\mathcal{F}, G) = \Delta^k\text{-DRESS}(\mathcal{F}, H).
\]
\end{proposition}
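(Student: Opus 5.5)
The plan is to reduce the statement to two facts: the base DRESS operator $\mathcal{F}$ is isomorphism-equivariant on edges, and $\phi$ induces a bijection between the $k$-subsets of $V(G)$ and those of $V(H)$ that is compatible with taking induced subgraphs.

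\textbf{Step 1: equivariance of $\mathcal{F}$.} Let $\psi: G' \to H'$ be any isomorphism. Define the edge relabeling $(\psi\mathbf{d})_{\psi(u)\psi(v)} = d_{uv}$, and extend it to self-loops. Then $F_{H'}(\psi\mathbf{d}) = \psi F_{G'}(\mathbf{d})$, where $F$ denotes one update step. This holds because $\psi$ maps closed neighborhoods to closed neighborhoods, so $N[\psi u]\cap N[\psi v] = \psi(N[u]\cap N[v])$. It also maps vertex norms to vertex norms. For Motif-DRESS, $\psi$ maps motif instances to motif instances, so $\mathcal{N}_M(\psi u,\psi v) = \psi(\mathcal{N}_M(u,v))$. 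For weighted graphs, $\bar w$ must be transported as well; I will assume isomorphism is weight-preserving in that case. For Generalized-DRESS, I would assume that $f$ and $g$ depend only on the labelled neighborhood data and not on vertex names; I expect this to be the only place where a hypothesis has to be made explicit.

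The uniform initialization $c\mathbf{1}$ is fixed by every relabeling, so induction on $t$ gives $\mathbf{d}^{(t)}_{H'} = \psi\,\mathbf{d}^{(t)}_{G'}$ for all $t$. Passing to the limit, which exists and is unique by Theorem~\ref{thm:convergence} and its Motif extension, gives $\mathcal{F}(H') = \psi\,\mathcal{F}(G')$. For the finite-tolerance computation, the same identity holds iterate by iterate. The stopping criterion $\max_e |d^{(t+1)}_e - d^{(t)}_e| < \epsilon$ is permutation-invariant, so both runs stop at the same $t$.

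\textbf{Step 2: sorting.} Because $\psi$ permutes the coordinates, $\operatorname{sort}(\mathcal{F}(H')) = \operatorname{sort}(\mathcal{F}(G'))$.

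\textbf{Step 3: deletions.} For $S\subset V(G)$ with $|S|=k$, the restriction of $\phi$ to $V(G)\setminus S$ is an isomorphism $G\setminus S \to H\setminus \phi(S)$. This is because both subgraphs are induced and $\phi$ preserves adjacency. Moreover, $S\mapsto\phi(S)$ is a bijection between the $k$-subsets of $V(G)$ and those of $V(H)$. Applying Step 2 to each pair shows that the two multisets coincide term by term under this bijection. This proves the proposition.

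The same bijection shows that the pooled multisets in Definition~\ref{def:histogram-fp} agree, so the histograms agree as well.

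The main obstacle is Step 1 in the generalized setting, specifically pinning down which hypotheses on $f,g,\mathcal{N}$ guarantee equivariance. The other steps are bookkeeping.
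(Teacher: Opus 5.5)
Your proof is correct and follows the same route as the paper. Both arguments use the bijection $S\mapsto\phi(S)$ on $k$-subsets, the restricted isomorphisms $G\setminus S\cong H\setminus\phi(S)$, invariance of $\mathcal{F}$, and termwise matching of the multisets, with the histogram claim following from equality of the pooled multisets. The paper simply asserts that ``DRESS depends only on graph structure,'' whereas you justify the invariance of $\mathcal{F}$ through one-step equivariance, the relabeling-fixed start $c\mathbf{1}$, induction on $t$, and passage to the limit; for that step you need only existence of the limit, not the uniqueness part of Theorem~\ref{thm:convergence}.
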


\begin{proof}
Any isomorphism $\phi: G \to H$ induces a bijection on $k$-element vertex subsets: $S \mapsto \phi(S)$. For each $S \subset V(G)$ with $|S|=k$, the restriction $\phi|_{V(G)\setminus S}$ is an isomorphism $G \setminus S \xrightarrow{\;\sim\;} H \setminus \phi(S)$. Since DRESS depends only on graph structure (not vertex labels), we have $\mathcal{F}(G \setminus S) = \mathcal{F}(H \setminus \phi(S))$ as multisets of edge values. The bijection $S \leftrightarrow \phi(S)$ matches every term of the multiset union on one side to an equal term on the other:
\[
\bigsqcup_{\substack{S \subset V(G),\; |S|=k}} \mathcal{F}(G \setminus S)
\;=\;
\bigsqcup_{\substack{T \subset V(H),\; |T|=k}} \mathcal{F}(H \setminus T).
\]
The bijection $S \leftrightarrow \phi(S)$ therefore gives $\Delta^k\text{-DRESS}(\mathcal{F}, G) = \Delta^k\text{-DRESS}(\mathcal{F}, H)$. Since $h$ depends only on the multiset of edge values, the histogram fingerprint $h(\Delta^k\text{-DRESS}(\mathcal{F}, G)) = h(\Delta^k\text{-DRESS}(\mathcal{F}, H))$ is likewise an isomorphism invariant.
\end{proof}

\subsection{Complexity}
\label{sec:complexity}

The time complexity of $\Delta^k\text{-DRESS}(\mathcal{F}, G)$ is $\binom{n}{k}$ times the complexity of a single DRESS run:
\[
\mathcal{O}\!\left(\binom{n}{k} \cdot T_{\mathcal{F}}\right)
\]
where $T_{\mathcal{F}}$ is the runtime of the chosen DRESS variant $\mathcal{F}$ on a single subgraph. The $\binom{n}{k}$ runs are entirely independent, making $\Delta^k$-DRESS embarrassingly parallel.

For Original-DRESS, $T_{\mathcal{F}} = \mathcal{O}(I \cdot m \cdot d_{\max})$, giving a total of $\mathcal{O}\!\left(\binom{n}{k} \cdot I \cdot m \cdot d_{\max}\right)$. For $k = 1$ this is $\mathcal{O}(n \cdot I \cdot m \cdot d_{\max})$; for $k = 0$ there is a single run on $G$ itself, recovering Original-DRESS at $\mathcal{O}(I \cdot m \cdot d_{\max})$.

For comparison, one naive refinement round of $(k{+}2)$-WL costs $\mathcal{O}(n^{k+3})$; after $T_{(k+2)\text{-WL}}$ rounds to stabilization, the total runtime is $\mathcal{O}\!\left(T_{(k+2)\text{-WL}} \cdot n^{k+3}\right)$. The full multiset fingerprint of $\Delta^k$-DRESS requires $\mathcal{O}\!\bigl(\binom{n}{k} \cdot m\bigr)$ space; the histogram fingerprint reduces this to $\mathcal{O}(n + m + \lceil 2/\epsilon \rceil)$ by processing one subgraph at a time and accumulating edge values into a fixed-size bin array, both compared to $\mathcal{O}(n^{k+2})$ for storing $(k{+}2)$-WL colors over all tuples.

\section{Family Structure}
The DRESS variants introduced in this paper form a nested hierarchy with one orthogonal composition operator:
\[
\text{Generalized-DRESS} \supset \text{Motif-DRESS} \supset \text{Original-DRESS}
\]

$\Delta^k$ is an \textbf{orthogonal wrapper} applicable to any of the above.

\section{Experimental Results}

All experiments use convergence tolerance $\epsilon = 10^{-6}$. For $\Delta$-DRESS, two fingerprint representations are compared: the \textbf{multiset fingerprint} (the full sorted concatenation of all per-deletion DRESS vectors) and the \textbf{histogram fingerprint}. The multiset representation preserves all numerical information; the histogram is a fixed-size summary that is faster to compare and more convenient for pooling across deletions.

\subsection{Convergence on Real-World Graphs}
\label{sec:convergence-empirical}

Table~\ref{tab:convergence} reports convergence on real-world graphs spanning four orders of magnitude in size ($\epsilon = 10^{-6}$, max 100 iterations).

\begin{table}[h]
\centering
\begin{tabular}{lrrrr}
\textbf{Graph} & $|V|$ & $|E|$ & \textbf{Iter.} & \textbf{Final $\delta$} \\
\hline
Wiki-Vote              & 8{,}298       & 103{,}689       & 17 & $8.31 \times 10^{-7}$ \\
Amazon co-purchasing   & 548{,}552     & 925{,}872       & 18 & $6.35 \times 10^{-7}$ \\
LiveJournal            & 4{,}033{,}138 & 27{,}933{,}062  & 30 & $7.09 \times 10^{-7}$ \\
Facebook (KONECT)      & 59{,}216{,}215 & 92{,}522{,}012 & 26 & $6.84 \times 10^{-7}$ \\
\end{tabular}
\caption{DRESS convergence iterations and final residual $\delta = \max_e |d^{(t)}_e - d^{(t-1)}_e|$. Even on graphs with 59\,M vertices, convergence requires fewer than 31 iterations. Across these benchmarks, convergence was reached in 17 to 30 iterations.}
\label{tab:convergence}
\end{table}

\subsection{Strongly Regular Graphs}
\label{sec:srg-results}

Strongly Regular Graphs (SRGs) are the canonical hard instances for polynomial-time isomorphism methods: all edges share identical local structure (same degree, same common-neighbor counts), so Original-DRESS ($\Delta^0$) assigns the same value to every edge and produces a uniform fingerprint. This is expected: SRGs are regular, and the Original-DRESS iteration sees no local variation.

$\Delta^1$-DRESS overcomes this limitation. By running DRESS on each vertex-deleted subgraph $G \setminus \{v\}$, the uniform regularity is broken and structurally distinct edges emerge. We tested $\Delta^1$-DRESS on 7{,}983 strongly regular graphs from the repository of Krystal Guo~\cite{guo2024srg}, spanning three parameter families:

\begin{center}
\begin{tabular}{lcccl}
\textbf{Family} & \textbf{Parameters} & \textbf{Graphs} & \textbf{Separated} & \textbf{Min $L^\infty$} \\
\hline
Conference (Mathon)       & $(45,22,10,11)$ & 6     & \textbf{100\%} & $4.16 \times 10^{-3}$ \\
Steiner S(2,4,28)         & $(63,32,16,16)$ & 4{,}466 & \textbf{100\%} & $1.95 \times 10^{-3}$ \\
Quasi-symmetric 2-designs & $(63,32,16,16)$ & 3{,}511 & \textbf{100\%} & $2.23 \times 10^{-3}$ \\
\end{tabular}
\end{center}

All 7{,}983 graphs are pairwise distinguished by $\Delta^1$-DRESS. The ``Min $L^\infty$'' column reports the smallest element-wise maximum difference between the fingerprints of any sampled pair (1{,}000 random pairs per family). Values around $10^{-3}$ confirm that separations are genuine, not floating-point artifacts. This was further validated by checking that the unique count remains stable across all rounding precisions from 6 to 14 decimal digits.

\paragraph{Multiset vs.\ histogram fingerprint.} Both representations produce identical separation results on all 7{,}983 SRGs. The multiset fingerprint preserves full numerical precision and is the canonical representation. The histogram fingerprint, with bin width $\epsilon = 10^{-6}$ over $[0,2]$, maps each value to one of $2 \times 10^{6}$ integer bins. On these families the two representations agree perfectly.

\subsection{CFI Staircase}
\label{sec:cfi-results}

The Cai--F\"{u}rer--Immerman (CFI) construction~\cite{cai1992optimal} produces the canonical hard instances for the WL hierarchy: distinguishing $\text{CFI}(K_n)$ from $\text{CFI}'(K_n)$ requires at least $(n{-}1)$-WL. We tested $\Delta^k$-DRESS for $k = 0, 1, 2, 3$:

\begin{center}
\begin{tabular}{ccccccc}
\textbf{Base} & \textbf{$|V|$} & \textbf{WL req.} & $\Delta^0$ & $\Delta^1$ & $\Delta^2$ & $\Delta^3$ \\
\hline
$K_3$    & 6    & 2-WL & $\checkmark$ & $\checkmark$ & $\checkmark$ & $\checkmark$ \\
$K_4$    & 16   & 3-WL & $\times$     & $\checkmark$ & $\checkmark$ & $\checkmark$ \\
$K_5$    & 40   & 4-WL & $\times$     & $\times$     & $\checkmark$ & $\checkmark$ \\
$K_6$    & 96   & 5-WL & $\times$     & $\times$     & $\times$     & $\checkmark$ \\
$K_7$    & 224  & 6-WL & $\times$     & $\times$     & $\times$     & $\times$     \\
\end{tabular}
\end{center}

For the tested depths $k = 0,1,2,3$, the pattern is exact: each deletion level adds one WL dimension of expressiveness. $\Delta^k$-DRESS distinguishes $\text{CFI}(K_{k+3})$ (requiring $(k{+}2)$-WL) and fails on $\text{CFI}(K_{k+4})$ (requiring $(k{+}3)$-WL), empirically matching the $(k{+}2)$-WL boundary on these instances.

\subsection{DRESS--WL Dominance Conjecture}

The experimental evidence above, together with the established practical equivalence to 2-WL (Theorem~\ref{thm:dress-2wl-bound}) and the CFI staircase results, motivates the following:

\begin{remark}[DRESS--WL Continuous Dominance Conjecture]
\label{conj:dress-wl}
In practical continuous computation, $\Delta^k$-DRESS acts as an empirical superset or equivalent to $(k+2)$-WL for all $k \geq 0$.
\end{remark}

For each fixed depth $k$, $\Delta^k$-DRESS defines a precise graph invariant and hence a precise indistinguishability relation, just as $k$-WL does. The unresolved issue is the comparison theorem between these two equivalence notions.

Several structural properties of DRESS make this conjecture reasonable:

\begin{enumerate}
    \item \textbf{Continuous vs.\ discrete.} Where WL produces discrete color classes, DRESS produces real-valued invariants. This yields a finer-grained representation of structural information and can preserve distinctions that discrete partitions collapse.
    \item \textbf{Edges vs.\ vertices.} DRESS operates natively on edges, while 1-WL operates on vertices and 2-WL on vertex pairs. This places DRESS structurally closer to pair-based refinement methods than to vertex-only refinement.
    \item \textbf{Non-linear update.} The DRESS update is a non-linear ratio rather than a discrete hash refinement, allowing it to encode interactions between neighborhood statistics in a different way from standard WL updates.
    \item \textbf{Deletion as symmetry breaking.} Vertex deletion in $\Delta^k$-DRESS plays a symmetry-breaking role analogous to individualization in higher-order WL arguments, providing a natural explanation for the observed expressiveness gains.
\end{enumerate}

The base case ($k = 0$) is established as an empirical equivalence (Theorem~\ref{thm:dress-2wl-bound}). The properties of the inductive step ($k \geq 1$) remain an open conjecture.

\section{Conclusion}
We presented DRESS, a deterministic, parameter-free framework that assigns to any graph a canonical fingerprint vector in continuous space. The fingerprint is isomorphism-invariant, numerically stable ($d^* \in [0,2]^{|E|}$), and computed at $\mathcal{O}(m \cdot d_{\max})$ per iteration with guaranteed convergence (Theorem~\ref{thm:convergence}). As a consequence of its continuous mathematical structure, Original-DRESS acts as an exact, memory-efficient empirical equivalent to 2-WL (Theorem~\ref{thm:dress-2wl-bound}), with per-iteration cost $\mathcal{O}(m \cdot d_{\max})$ versus the naive per-round cost $\mathcal{O}(n^3)$ of 2-WL. We generalized the original equation to Motif-DRESS (arbitrary structural motifs) and Generalized-DRESS (abstract aggregation template), and introduced $\Delta$-DRESS, which breaks symmetry by vertex deletion. $\Delta$-DRESS empirically separates all 7{,}983 Strongly Regular Graphs in a comprehensive benchmark, and on the tested CFI instances ($k = 0,1,2,3$), $k$-deletion ($\Delta^k$-DRESS) empirically matches the $(k{+}2)$-WL boundary. Because DRESS produces a canonical fingerprint that encodes structural similarity at every scale, it serves as a principled, scalable, parameter-free descriptor for downstream graph analysis tasks. This has already been demonstrated in community detection~\cite{castrillo2018dynamicstructuralsimilaritygraphs, castrillo2018community}, where DRESS edge values naturally classify intra- and inter-community edges. An open-source implementation with bindings for C, C++, Python, Rust, Go, Julia, R, MATLAB, and WebAssembly is publicly available at \url{https://github.com/velicast/dress-graph}.

\bibliographystyle{plain}
\bibliography{refs}

\end{document}